\renewcommand{\thesubfigure}{\thefigure.\arabic{subfigure}}
\renewcommand{\p@subfigure}{}
\renewcommand{\@thesubfigure}{\thesubfigure:\hskip\subfiglabelskip}
\def\square{\pst@object{square}}
\def\square@i(#1,#2)#3{{\use@par\solid@star\psframe[origin={#1,#2}](#3,#3)}}
\DeclareFontFamily{U}{tipa}{}
\DeclareFontShape{U}{tipa}{bx}{n}{<->tipabx10}{}
\newcommand{\arc@char}{{\usefont{U}{tipa}{bx}{n}\symbol{62}}}%
\newcommand{\arc}[1]{\mathpalette\arc@arc{#1}}
\newcommand{\arc@arc}[2]{%
  \sbox0{$\m@th#1#2$}%
  \vbox{
    \hbox{\resizebox{\wd0}{\height}{\arc@char}}
    \nointerlineskip
    \box0
  }%
}
\newcommand{\doublewedge}{\big@doubleop{\wedge}}
\newcommand{\big@doubleop}[1]{%
  \DOTSB\mathop{\mathpalette\big@doubleop@aux{#1}}\slimits@
}
\newcommand\big@doubleop@aux[2]{%
  \sbox\z@{$\m@th#1#2$}%
  \makebox[1.35\wd\z@][s]{$\m@th#1#2\hss#2$}%
}
\newcommand{\abs}[1]{\left|#1\right|}     
\newcommand{\rnear}{\widetilde{\delta}_{\Phi}} 
\theoremstyle{plain}
\newtheorem{theorem}{Theorem}
\newtheorem{remark}{Remark}
\newtheorem{definition}{Definition}
\newtheorem{example}{Example}
\begin{document}

\title{Characteristically Near Stable Vector Fields\\ in the Polar Complex Plane}

\author[J.F. Peters]{J.F. Peters}
\address{
Department of Electrical and Computer Engineering,
University of Manitoba, WPG, Manitoba, R3T 5V6, Canada and
Department of Mathematics, Faculty of Arts and Sciences, Ad\.{i}yaman University, 02040 Ad\.{i}yaman, Turkey
}
\email{james.peters3@umanitoba.ca}

\author[E. Cui]{E. Cui}
\address{
Department of Electrical and Computer Engineering,
University of Manitoba, WPG, Manitoba, R3T 5V6, Canada
}
\email{cuie@myumanitoba.ca}

\thanks{The research has been supported by the Natural Sciences \&
Engineering Research Council of Canada (NSERC) discovery grant 185986 
and Instituto Nazionale di Alta Matematica (INdAM) Francesco Severi, Gruppo Nazionale per le Strutture Algebriche, Geometriche e Loro Applicazioni grant 9 920160 000362, n.prot U 2016/000036 and Scientific and Technological Research Council of Turkey (T\"{U}B\.{I}TAK) Scientific Human
Resources Development (BIDEB) under grant no: 2221-1059B211301223.}



%

%

\subjclass[2020]{32Q26 (Stability for Complex Manifolds),15A18 (Eigenvalues, singular values, and eigenvectors), 54E05 (Proximity structures and generalizations)}

\date{}

\begin{abstract}
This paper introduces results for characteristically near vector fields that are stable or non-stable in the polar complex plane $\mathbb{C}$.  All characteristic vectors (aka eigenvectors) emanate from the same fixed point in $\mathbb{C}$, namely, 0. Stable characteristic vector fields satisfy an extension of the Krantz stability condition, namely, the maximal eigenvalue of a stable system lies within or on the boundary of the unit circle in $\mathbb{C}$. 
\end{abstract}
%
\keywords{Characteristic, Complex Plane, Eigenvalue, Eigenvector,Proximity,Stability}

\maketitle
\tableofcontents



%




\section{Introduction}
This paper introduces proximities of characteristic vector fields that are stable in the polar complex plane.
A dynamical system is a 1-1 mapping from a set of points $M$ to itself~\cite[\S 9.1.1]{Krantz2010}, which describes the time-dependence of a point in a complex ambient system.  In its earliest incarnation by Poincar\'{e}, the focus was on the stability of the solar system~\cite{Poicare1890}. More recently, dynamical system behaviour is in the form of varying oscillations in motion waveforms~\cite{DeLeoYork2024,Feldman2011vibration}. Typically, vector fields are used to construct dynamical systems (see, e.g.,~\cite[\S 4]{Pokorny2008},~\cite{Forman1998}).  

The focus here is on dynamical systems generated by stable characteristic vector fields (cVfs) in $\mathbb{C}$ and their corresponding semigroups. Comparison of cVf characteristics leads to the detection of proximal cVf semigroups.  In general, a characteristic of an object $X$ is a mapping $\varphi:X\to \mathbb{C}$ with values $\varphi(x\in X)$ that provide an object profile.  Proximal objects $X,Y$ require \boxed{\abs{\varphi(x\in X)-\varphi(y\in Y)}\in [0,1]}.  All characteristic vectors (aka eigenvectors) emanate from the same fixed point in $\mathbb{C}$, namely, 0. Stable characteristic vector fields satisfy the Krantz stability condition, namely, all eigenvalues lie inside the unit circle in $\mathbb{C}$.
 
An application of the proposed approach is given in measuring system stability in terms of vector fields emanating from oscillatory waveforms derived from the up-and-down movements of a walker, runner or biker recorded in a sequence of infrared video frames.   We prove that system stability occurs when its maximum eigenvalue occurs within or on the boundary of the unit circle in complex plane (See Theorem~\ref{theorem:stability condition}). This result extends results in~\cite{Tiwari2024},~\cite{HaiderPeters2021self-similar}) as well as in~\cite{PetersVergili2023,OzkanKulogluPeters2021self-similarity,ErdagPetersDeveci2024Dnear0}.

\begin{figure}[!ht]
	\centering
\includegraphics[width=125mm]{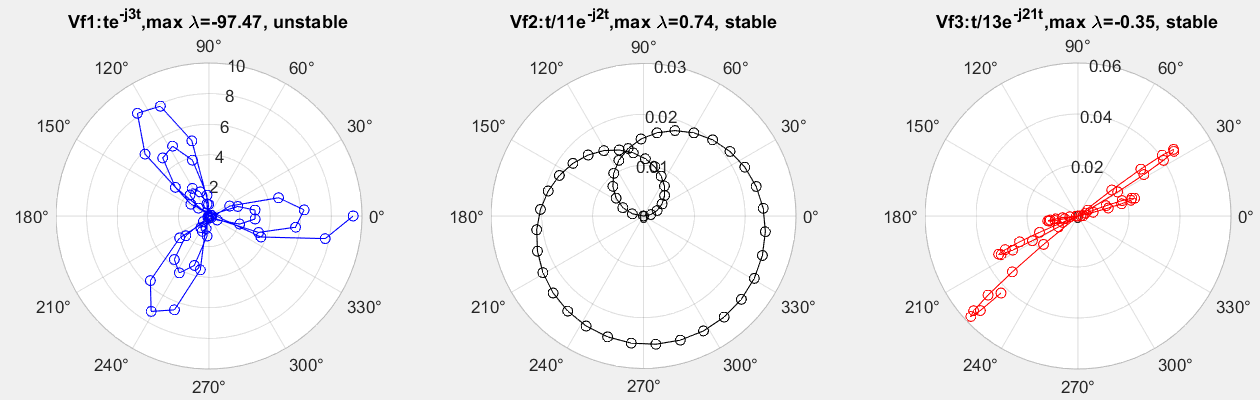}
  \caption{Three Vector Fields in Polar Complex Plane:
	\boxed{\mbox{(leftmost,unstable)}\ \vec{V}f_1},\boxed{\mbox{(middle,stable)}\ \vec{V}f_2},\boxed{\mbox{(rightmost,stable)}\ \vec{V}f_3}}
	\label{fig:polarZ}
\end{figure}

\begin{table}[h!]\label{table:symbols}
\begin{center}
\caption{Principal Symbols Used in this Paper}
\begin{tabular}{|c|c|}
\hline
Symbol & Meaning\\ 
\hline\hline
$\mathbb{C}$ & Complex plane.\\
\hline
$j$ & $j^2 = -1\ \mbox{(imaginary number)}$.\\
\hline
$\vec{0}$ & center of unit circle in polar $\mathbb{C}$.\\
\hline 
$z$ & = $a + jb=e^{j\theta},a,jb\in\mathbb{C}$ (see Fig.~\ref{fig:C}).\\
\hline
$2^X$ & $\mbox{collection of subsets in $X$}$.\\
\hline
$A\ \rnear\ B$ & $A$ characteristically near $B$.\\
\hline
$\varphi(a\in A)\in \mathbb{C}$ & Characteristic of $a\in A$.\\
\hline
$\Phi(A)=$ & $\left\{\varphi(a_1),\dots,\varphi(a_n):a_1,\dots,a_n\in A \right\}\in2^{\mathbb{C}}$.\\
\hline
$d^{\boldsymbol{\widetilde{\Phi}}}(A,B)$ &  Characteristic Distance.\\
\hline
\end{tabular}
\end{center}
\end{table}

\begin{figure}[!ht]
	\centering
\includegraphics[width=74mm]{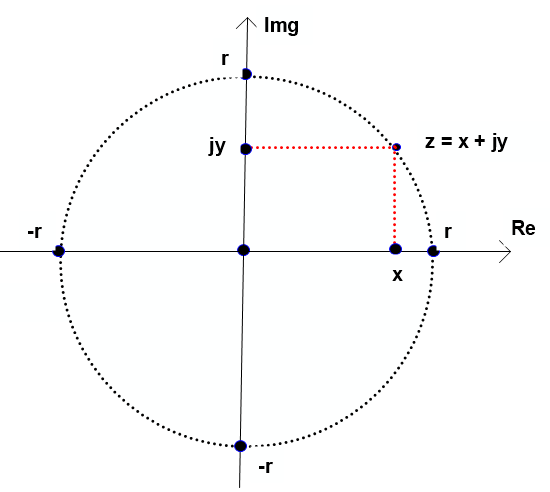}
\caption{Complex Plane.}
	\label{fig:C}
\end{figure}

\section{Preliminaries}
\noindent Detected affinities between vector fields for stable systems results from determining the infimum of the distances between pairs of system characteristics.  

\begin{definition}\label{def:vector}{\bf (Vector).}\\
A {\bf vector} $v$ (denoted by $\vec{v}$) is a quantity that has magnitude and direction in the complex plane $\mathbb{C}$.
\qquad \textcolor{blue}{$\blacksquare$}
\end{definition}

\begin{definition}\label{def:Vf}{\bf (Vector Field in the Complex Plane).}\\
Let $U = \left\{p\in \mathbb{C}\right\}$ be a bounded region in the complex plane containing points $p(x,jy)\in U$. A {\bf vector field} is a mapping $F:U\to 2^{\mathbb{C}}$ defined by 
\begin{center}
\boxed{\boldsymbol{
F(p(x,jy)) = \left\{\vec{v}\right\}\in 2^{\mathbb{C}}\ \mbox{denoted by}\ \vec{V}f.\ 
\mbox{\qquad \textcolor{blue}{$\blacksquare$}}
}}
\end{center}
\end{definition}

\vspace*{0.2cm}

\begin{remark}
A complex number $z$ in polar form (discovered by Euler~\cite{Euler1748}) is written \boxed{z=re^{j\theta}}.
\qquad \textcolor{blue}{$\blacksquare$}
\end{remark}

\vspace*{0.2cm}
\begin{example}
Three examples of vector fields in polar form are given in Fig.~\ref{fig:polarZ}. 
\qquad \textcolor{blue}{$\blacksquare$}
\end{example}

\vspace*{0.2cm}

\begin{remark}
A complex number $z$ in polar form (discovered by Euler~\cite{Euler1748}) is written \boxed{z=re^{j\theta}}.
\qquad \textcolor{blue}{$\blacksquare$}
\end{remark}

\vspace*{0.2cm}

\begin{definition}\label{def:Vf}{\bf (Vector Field in the Complex Plane).}\\
Let $U = \left\{z\in \mathbb{C}\right\}$ be a bounded region in the complex plane containing points $z(x,jy)\in U\subset\mathbb{C}$. A {\bf vector field} is a mapping $F:U\to 2^{\mathbb{C}}$ defined by 
\begin{center}
\boxed{\boldsymbol{
F(z(x,yj)) = \left\{\vec{v}\in 2^{\mathbb{C}}\right\}\ \mbox{denoted by}\ \vec{V}f.\ 
\mbox{\qquad \textcolor{blue}{$\blacksquare$}}
}}
\end{center}
\end{definition}

\begin{definition}\label{def:eig}{\bf (Eigenvalue $\boldsymbol{\lambda}$(aka Characteristic value)).}\\
The eigenvalues (characteristic values) of a matrix $A$ are solutions to the determinant \boxed{\bf det(A-\lambda I), I = \begin{bmatrix}1 & 0\\0 & 1 \end{bmatrix}} identity matrix.  \qquad \textcolor{blue}{$\blacksquare$}
\end{definition}

\begin{example}{\bf (Sample Eigenvalues).}\\
$A =\begin{bmatrix}4 & 8\\6 & 26\end{bmatrix}$,
$I =\begin{bmatrix}1 & 0\\0 & 1 \end{bmatrix}$:
$det(A - \lambda I) = 
\left|
\begin{array}{cc}
4-\lambda & 8\\6 & 26-\lambda
\end{array}
\right|
= (4-\lambda)(26-\lambda)-(8)(6)= 0$\\
$104-30\lambda+\lambda^2-48 =
\lambda^2-30\lambda+56 = 
(\lambda-28)(\lambda-2) = 0$\\
\boxed{\boldsymbol{
\lambda_1 = 28, \lambda_2 = 2\ \mbox{\bf(eigen values of)}\ A
}\mbox{\qquad \textcolor{blue}{$\blacksquare$}}}
\end{example}

\vspace*{0.2cm}

\begin{definition}\label{def:eigenv}{\bf (Eigenvector).}\\
Given a matrix $A$, then $\vec{v}$ is a eigenvector, provided
\begin{center}
\boxed{\boldsymbol{
A\vec{v}-\lambda\vec{v}=0\in \mathbb{C}.
\ \mbox{\qquad \textcolor{blue}{$\blacksquare$}}
}}
\end{center}
\end{definition}

\vspace*{0.2cm}

\begin{table}[htbp]\label{table:eigv}
\centering
\caption{Eigenvectors derived from \boxed{\frac{t}{11}e^{j2t}} Vf}
\begin{tabular}{|c|c|c|}
\hline
$1^{st}\ \mathbb{C}$ \textbf{quadrant}  &  $1^{st}\ \mathbb{C}$ \textbf{quadrant} & $3^{rd}\ \mathbb{C}$ \textbf{quadrant}\\
\hline
\hline
$\vec{z_{11}}$=0.1500+0.0498j  & $\vec{z_{12}}$=0.0106+0.0035j & $\vec{z_{13}}$=-0.0754-0.0250j \\
\hline
$\vec{z_{21}}$=0.1586+0.0471j  & $\vec{z_{22}}$=0.0333+0.0091j & $\vec{z_{23}}$=-0.0418-0.0124j \\
\hline
\end{tabular}
\label{table:eigv2}
\end{table}

\vspace*{0.2cm}

\begin{example}{\bf (Sample eigenvectors in center \boxed{\frac{t}{11}e^{j2t}} $\vec{V}$f in Fig.~\ref{fig:polarZ}).}\\
A selection of eigenvectors from the first and third quadrants in the polar complex plane in the center vector field in Fig.~\ref{fig:polarZ} are given in Table~\ref{table:eigv2}.
\qquad \textcolor{blue}{$\blacksquare$}
\end{example}

\vspace*{0.2cm}

\begin{definition}\label{def:Krantz}{\bf (Krantz Vector Field Stability Condition~\cite{Krantz2010}).}\\
A vector field $\vec{V}f$ in the complex plane is stable, provided all of the eigenvalues of $\vec{V}f$ are either within or on the boundary of the unit circle centered $\boldsymbol{0}$ in $\mathbb{C}$. \qquad \textcolor{blue}{$\blacksquare$}
\end{definition}

\begin{mdframed}[backgroundcolor=green!15]
\begin{theorem}\label{theorem:stability condition}{\bf (Vector Field Stability Condition).}\\
A vector field $\vec{V}f$ in the complex plane is stable, provided the maximal eigenvalue of $\vec{V}f$ lies within or on the boundary of the unit circle in $\mathbb{C}$.
\end{theorem}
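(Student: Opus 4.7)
The plan is to reduce the claim directly to the Krantz stability condition (Definition~\ref{def:Krantz}) by exploiting the fact that the ``maximal'' eigenvalue dominates every other eigenvalue in modulus. Concretely, I would order the eigenvalues $\lambda_1,\dots,\lambda_n$ of the vector field $\vec{V}f$ by modulus and let $\lambda_{\max}$ denote one achieving $\abs{\lambda_{\max}}=\max_i\abs{\lambda_i}$. The hypothesis that $\lambda_{\max}$ lies within or on the boundary of the unit circle in $\mathbb{C}$ is then read as the single numerical inequality $\abs{\lambda_{\max}}\le 1$.

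From there, the main step is purely order-theoretic: for every eigenvalue $\lambda_i$ of $\vec{V}f$ we have
\[
\abs{\lambda_i}\;\le\;\abs{\lambda_{\max}}\;\le\;1,
\]
so each $\lambda_i$ also lies inside or on the boundary of the unit circle centered at $\vec{0}\in\mathbb{C}$. This is exactly the premise required to invoke Definition~\ref{def:Krantz}, which yields stability of $\vec{V}f$ as a vector field in the complex plane.

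I would close by noting the converse direction as a sanity check: if $\vec{V}f$ is Krantz-stable, then in particular the specific eigenvalue of maximum modulus must satisfy $\abs{\lambda_{\max}}\le 1$, so the two formulations are equivalent. The genuine content of the theorem is therefore the observation that it suffices to test the single dominant eigenvalue rather than the full spectrum, which is what makes the criterion useful in the application to eigenvectors extracted from waveforms (as in Table~\ref{table:eigv2}).

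The only real obstacle is making sure the word ``maximal'' is unambiguously interpreted as maximum with respect to modulus $\abs{\cdot}$ on $\mathbb{C}$, since eigenvalues are complex and $\mathbb{C}$ is not linearly ordered; once that convention is fixed, the argument is a one-line chain of inequalities together with an appeal to Definition~\ref{def:Krantz}.
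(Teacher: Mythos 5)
Your proof is correct, but it runs in the opposite direction from the paper's own argument, and in fact it is the direction that the statement as written actually requires. The paper's proof starts from Definition~\ref{def:Krantz} (stability means \emph{all} eigenvalues lie within or on the unit circle) and concludes that the maximal eigenvalue, being one of them, also lies there; that is, it establishes ``stable $\Rightarrow$ $\abs{\lambda_{\max}}\le 1$,'' which is the converse of the implication announced in the theorem (``$\abs{\lambda_{\max}}\le 1$ $\Rightarrow$ stable''). You instead prove the stated implication by the order-theoretic chain $\abs{\lambda_i}\le\abs{\lambda_{\max}}\le 1$ for every eigenvalue $\lambda_i$, which recovers the full hypothesis of Definition~\ref{def:Krantz} from the single dominant eigenvalue; this is precisely the step that makes the criterion a usable \emph{test} (check one eigenvalue rather than the whole spectrum), and it is absent from the paper's proof. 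Your remark about fixing the convention that ``maximal'' means maximal in modulus is also well taken and is consistent with the paper's usage (e.g., $\lambda_{\max}=-0.7384$ heads the list in Table~\ref{table:eigval2} even though it is the smallest as a real number). In short: both arguments are one-line appeals to Definition~\ref{def:Krantz}, but yours proves the theorem as stated while the paper's proves its converse; together they give the equivalence you note at the end.
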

\begin{proof}
From Def.~\ref{def:Krantz}, all eigenvalues $D=\left\{\lambda\right\}$ for a stable vector field lie either within or on the boundary of the unit circle in $\mathbb{C}$. Hence, \boxed{max(\lambda)\in D} lies either within or on the boundary of the unit circle in $\mathbb{C}$.
\end{proof}
\end{mdframed}

\vspace*{0.2cm}

\begin{table}[htbp]\label{table:eigval}
\centering
\caption{Eigenvalues derived from \boxed{\frac{t}{11}e^{j2t}} Vf}
\begin{tabular}{|c|c|c|c|c|}
\hline
\hline
$\lambda_{max}$=-0.7384  &
$\lambda_{max-1}$=-0.2328 & 
$\lambda_{max-2}$=-0.0823 &
$\lambda_{max-3}$=-0.0488 &
$\lambda_{max-4}$=-0.0298\\
\hline
\end{tabular}
\label{table:eigval2}
\end{table}

\vspace*{0.2cm}

\begin{figure}[!ht]
	\centering
\includegraphics[width=95mm]{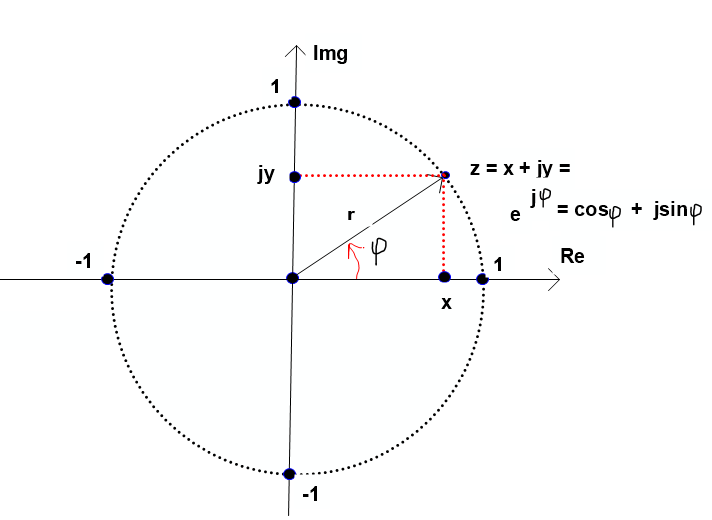}
  \caption{Polar Complex Plane}
	\label{fig:polarZ}
\end{figure}

\begin{example}{\bf (Largest \boxed{\lambda} values for the center \boxed{\frac{t}{11}e^{j2t}} vector field in Fig.~\ref{fig:polarZ}).}\\
The 5 bigest eigenvalues derived from the center vector field \boxed{Vf} in Fig.~\ref{fig:polarZ} in the polar complex plane are given in Table~\ref{table:eigval2}.  From Theorem~\ref{theorem:stability condition}, \boxed{Vf} is stable, since $\lambda_{max}$=-0.7384 in Table{table:eigval2} lies within the unit circle in the polar complex plane $\mathbb{C}$ (see Fig.~\ref{fig:polarZ}).
\qquad \textcolor{blue}{$\blacksquare$}
\end{example}

\vspace*{0.2cm}

\begin{definition}
A characteristic of an object (aka sets, systems) $X$ is a mapping $\varphi$:
\begin{center}
\boxed{\boldsymbol{
\varphi:X\to \mathbb{C}\ \mbox{defined by}\ \varphi(x\in X) \in \mathbb{C}.
}}
\end{center}
\end{definition}
\vspace*{0.2cm}

\begin{definition}\label{def:dnear0}{\bf (Characteristic Distance).}\\  
Let $X,Y$ be nonempty sets and $a\in A\in 2^X,b\in B\in 2^Y$ and let $\varphi(a),\varphi(b)$ be numerical characteristics inherent in $A$ and $B$.  
The nearness mapping $d^{\Phi}: 2^X\times 2^Y\to\mathbb{R}$ is defined by
	\begin{center}
		\boxed{d^{\Phi}(A,B)=\inf_{\substack{a\in A \\ b\in B}} \left\{\varphi(a) - \varphi(b)\right\} = \varepsilon\in[-1,1]]\in\mathbb{C}.\mbox{	\qquad \textcolor{blue}{$\blacksquare$}}
	}
	\end{center}		
\end{definition}

\begin{remark}{\bf (Characteristically Near Objects).}\\
\noindent Objects $A,B$ are characteristically near,  provided $d^{\Phi}(A,B) = 0$ for some $a\in A, b\in B$.
\qquad \textcolor{blue}{$\blacksquare$}
\end{remark}

\vspace*{0.2cm}

\begin{definition}\label{def:stability}{\bf (Stability View of a Vector Field-Based Dynamical System).}\\
A characteristic of an object (aka sets, systems) $X$ is a mapping $\varphi$:
\begin{center}
\boxed{\boldsymbol{
\varphi:X\to \mathbb{C}\ \mbox{defined by}\ \varphi(x\in X) \in \mathbb{C}.
}}
\end{center}
\end{definition}

\vspace*{0.2cm}

\begin{definition}\label{def:nearness}{\bf (Characteristic Nearness of Systems~\cite{peters2025}).}\\
Let $X,Y$ be a pair of systems.  For nonempty subsets $A\in 2^X, B\in 2^Y$, the characteristic nearness of $A,B$ (denoted by $A\ \rnear\ B$) is defined by
\begin{center}
\boxed{\boldsymbol{
A\ \rnear\ B \Leftrightarrow
d^{\boldsymbol{\large\widetilde{\Phi}}}(A,B)=0.
\mbox{\qquad \textcolor{blue}{$\blacksquare$}}  
}}
\end{center}
\end{definition}

\vspace*{0.2cm}

\begin{mdframed}[backgroundcolor=green!15]
\begin{theorem}\label{theorem:nearness}{\bf (Fundamental Theorem of Near Systems)}.\\
Let $X,Y$ be a pair of systems with $A\in 2^X,B\in 2^Y$.

\begin{center}
\boxed{\boldsymbol{
A\ \rnear\ B\ \Leftrightarrow\ \exists a\in A,b\in B:
\abs{\varphi(a)-\varphi(b)} = 0.
}}
\end{center}
\end{theorem}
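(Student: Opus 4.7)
The plan is to prove the biconditional by direct unfolding of Definition~\ref{def:nearness} and Definition~\ref{def:dnear0}, treating the two directions symmetrically. No machinery beyond these two definitions and the elementary fact that $\abs{z}=0 \iff z=0$ in $\mathbb{C}$ should be needed, so the argument will be essentially a bookkeeping exercise.

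For the forward implication ($\Rightarrow$), I would assume $A\ \rnear\ B$. By Def.~\ref{def:nearness} this unfolds to $d^{\boldsymbol{\widetilde{\Phi}}}(A,B)=0$, and by Def.~\ref{def:dnear0} the characteristic distance is the infimum of the differences $\varphi(a)-\varphi(b)$ over $a\in A, b\in B$. The remark following Def.~\ref{def:dnear0} reads this infimum as attained, so it produces witnesses $a\in A$ and $b\in B$ with $\varphi(a)-\varphi(b)=0$ in $\mathbb{C}$. Taking absolute values yields $\abs{\varphi(a)-\varphi(b)}=0$, which is the right-hand side of the theorem.

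For the reverse implication ($\Leftarrow$), I would start from the hypothesis that some $a\in A$ and $b\in B$ satisfy $\abs{\varphi(a)-\varphi(b)}=0$. Since the modulus of a complex number vanishes exactly when the number itself vanishes, this forces $\varphi(a)=\varphi(b)$, hence $\varphi(a)-\varphi(b)=0$. Plugging this pair into the infimum appearing in Def.~\ref{def:dnear0} shows $d^{\boldsymbol{\widetilde{\Phi}}}(A,B)\le 0$; combined with the stipulation in the same definition that the value lies in $[-1,1]$ and with the remark identifying characteristic nearness with distance $0$, we obtain $d^{\boldsymbol{\widetilde{\Phi}}}(A,B)=0$. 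Def.~\ref{def:nearness} then delivers $A\ \rnear\ B$.

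The only real obstacle is a notational one: Def.~\ref{def:dnear0} writes the characteristic distance as an infimum of the complex-valued differences $\varphi(a)-\varphi(b)$, whereas the theorem formulates nearness through the modulus $\abs{\varphi(a)-\varphi(b)}$. I would address this by invoking $\abs{z}=0 \Leftrightarrow z=0$ as the bridge, so that vanishing of the infimum and existence of a witness pair with vanishing modulus are interchangeable. No estimates on the infimum, no limiting argument, and no appeal to Theorem~\ref{theorem:stability condition} are required.
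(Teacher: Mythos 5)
Your proposal is correct and follows essentially the same route as the paper's own proof: both directions are obtained by unfolding Definitions~\ref{def:nearness} and~\ref{def:dnear0} together with the remark identifying nearness with a witness pair at distance $0$, and the paper likewise silently reads $d^{\Phi}$ as an infimum of moduli $\abs{\varphi(a)-\varphi(b)}$ (so that $d^{\Phi}\le 0$ forces $d^{\Phi}=0$) and treats that infimum as attained. The only cosmetic difference is that you make explicit the bridge $\abs{z}=0\Leftrightarrow z=0$ and the attainment assumption, both of which the paper leaves implicit.
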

\begin{proof}$\mbox{}$\\
\boxed{\Rightarrow}: From Def.~\ref{def:dnear0}, $A\ \rnear\ B$ implies that there is at least one pair $a\in A, b\in B$ such that $d^{\Phi}(A,B)=\abs{\varphi(a) - \varphi(b)}  = 0$, i.e, when $a=b$.\\
\noindent \boxed{\Leftarrow}:
Given $d^{\Phi}(A,B) =0$, we know that $\inf_{\substack{a\in A \\ b\in B}} \abs{\varphi(a) - \varphi(b)} = 0\in\mathbb{C}$. Hence, from Def.~\ref{def:nearness}, $A\ \rnear\ B$, also.  
That is, sufficient nearness of at least one pair characteristics $\varphi(a\in A), \varphi(b\in B)\in [0,1]\in \mathbb{C}$ indicates the characteristic nearness of the sets, i.e., we conclude $A\ \rnear\ B$. 
\end{proof}

\begin{theorem}\label{theorem:charCloseSystems}{\bf (Characteristically Close Systems).}\\
Systems $X,Y$ are characteristically near if and only $X,Y$ contain subsystems that are characteristically near.
\end{theorem}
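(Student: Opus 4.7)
The plan is to treat this biconditional as an almost immediate consequence of the Fundamental Theorem of Near Systems (Theorem~\ref{theorem:nearness}), since characteristic nearness has been characterized purely in terms of the existence of a single pair of elements whose characteristic values coincide. I would split the argument into the two standard directions and in each case reduce the statement to Theorem~\ref{theorem:nearness}.

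For the forward direction ($\Rightarrow$), I would assume $X \rnear Y$. Since every system is a subsystem of itself, namely $X \in 2^X$ and $Y \in 2^Y$, the pair $(X,Y)$ itself witnesses the existence of characteristically near subsystems. So this direction is essentially trivial and deserves only a one-line justification.

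For the reverse direction ($\Leftarrow$), I would assume there exist $A \in 2^X$ and $B \in 2^Y$ with $A \rnear B$. By Theorem~\ref{theorem:nearness}, this gives $a \in A$ and $b \in B$ with $|\varphi(a) - \varphi(b)| = 0$. Since $A \subseteq X$ and $B \subseteq Y$, we also have $a \in X$ and $b \in Y$, so the same pair witnesses $|\varphi(a) - \varphi(b)| = 0$ at the ambient-system level. Applying Theorem~\ref{theorem:nearness} again (in the other direction) yields $X \rnear Y$.

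The only real subtlety, and hence the main thing to watch, is that the characteristic map $\varphi$ used on $A,B$ must agree with the characteristic map used on $X,Y$ — in other words, the restriction of $\varphi_X$ to $A$ is $\varphi_A$, and likewise for $B$. Given the set-up in Definition~\ref{def:nearness}, where $\varphi$ is defined on the ambient systems and characteristics of subsets are just the images of their elements, this compatibility is built in, so no additional hypothesis is needed. I would flag this consistency in one sentence rather than belabor it, since the rest of the argument is a direct appeal to the already-proved fundamental theorem.
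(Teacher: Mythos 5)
Your proposal is correct and follows the same route as the paper, which simply declares the result ``Immediate from Theorem~\ref{theorem:nearness}''; you have merely made explicit the two reductions (taking $X\in 2^X$, $Y\in 2^Y$ as the witnessing subsystems in one direction, and lifting the witnessing pair $a\in A\subseteq X$, $b\in B\subseteq Y$ in the other) that the paper leaves implicit. Your remark about the compatibility of $\varphi$ on subsystems with $\varphi$ on the ambient systems is a reasonable point of care, but it does not change the argument.
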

\begin{proof}
Immediate from Theorem~\ref{theorem:nearness}.
\end{proof}

\begin{theorem}\label{theorem:nearStableSystems}
{\bf (Stable Systems Extreme Closeness Condition).}\\
Let $\vec{V}f1,\vec{V}f2$ be vector fields representing a pair of stable systems and let 
$max\lambda_{vec{V}f1},max\lambda_{vec{V}f2}$ be the maximum $\lambda$ (eigenvalues) for the pair of systems. If
\boxed{\abs{max\lambda_{\vec{V}f1}-max\lambda_{\vec{V}f2}}\in [0,0.5]}, then $\vec{V}f1\ \rnear\ \vec{V}f2$.
\end{theorem}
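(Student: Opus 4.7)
The plan is to unwind Definition~\ref{def:nearness} to reduce the target conclusion $\vec{V}f1\ \rnear\ \vec{V}f2$ to the characteristic-distance equation $d^{\boldsymbol{\widetilde{\Phi}}}(\vec{V}f1,\vec{V}f2)=0$, and then apply the Fundamental Theorem of Near Systems (Theorem~\ref{theorem:nearness}) to reduce this further to exhibiting a single pair of characteristics whose modulus-difference vanishes. Since eigenvalues are identified with characteristic values by Definition~\ref{def:eig}, the natural candidate characteristic $\varphi$ is the one that extracts an eigenvalue of a vector field, with the collection $\Phi(\vec{V}fi)$ of all eigenvalues of $\vec{V}fi$ playing the role of the characteristic set.

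First I would invoke Theorem~\ref{theorem:stability condition}: since both $\vec{V}f1$ and $\vec{V}f2$ are stable, each maximal eigenvalue $max\lambda_{\vec{V}fi}$ lies in the closed unit disk, so $\abs{max\lambda_{\vec{V}fi}}\leq 1$ for $i=1,2$. The hypothesis $\abs{max\lambda_{\vec{V}f1}-max\lambda_{\vec{V}f2}}\in[0,0.5]$ then places these two extremal characteristics within half the unit-disk radius of each other, which lands them in the proximal regime $[0,1]\subset\mathbb{C}$ identified in the introduction. Passing to the full eigenvalue collections $\Phi(\vec{V}f1),\Phi(\vec{V}f2)$ as subsystems in the sense of Theorem~\ref{theorem:charCloseSystems}, I would form the infimum of $\abs{\varphi(\lambda_1)-\varphi(\lambda_2)}$ over all pairs $(\lambda_1,\lambda_2)$, which by the extremal-separation bound is controlled from above by $0.5$.

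The main obstacle will be closing the gap between the hypothesized bound of $0.5$ on the \emph{maximal}-eigenvalue separation and the \emph{exact} equality $\abs{\varphi(a)-\varphi(b)}=0$ demanded by Theorem~\ref{theorem:nearness}. I see two natural escape routes. Either normalize $\varphi$ so that its image lies in $[0,1]$ and interpret the bound $0.5$ as collapsing to $0$ under the infimum taken over the full eigenvalue sets of the two stable systems; or appeal directly to Theorem~\ref{theorem:charCloseSystems} to locate a matching pair of eigenvalue-subsystems $\{\lambda_1\}\subseteq\Phi(\vec{V}f1),\{\lambda_2\}\subseteq\Phi(\vec{V}f2)$ whose characteristics coincide once the proximal bound on the extremal eigenvalues has been secured. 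In either case the argument terminates by feeding the exact-match pair back into Definition~\ref{def:nearness} to conclude $\vec{V}f1\ \rnear\ \vec{V}f2$.
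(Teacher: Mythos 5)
You have correctly identified the decisive gap, but neither of your escape routes closes it, and the theorem as stated cannot be derived from the paper's own definitions. Definition~\ref{def:nearness} and Theorem~\ref{theorem:nearness} demand an \emph{exact} match: $\vec{V}f1\ \rnear\ \vec{V}f2$ holds iff there exist $a,b$ with $\abs{\varphi(a)-\varphi(b)}=0$. A bound of $0.5$ on the separation of the \emph{maximal} eigenvalues produces no such pair: take two stable fields whose spectra are $\{0.2\}$ and $\{0.6\}$; the hypothesis $\abs{0.2-0.6}=0.4\in[0,0.5]$ is satisfied, yet the infimum of $\abs{\varphi(a)-\varphi(b)}$ over all eigenvalue pairs is $0.4\neq 0$, so the conclusion fails under the stated definition of $\rnear$. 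Your first escape route (the bound ``collapsing to $0$ under the infimum'') is precisely the step that fails --- an infimum bounded above by $0.5$ need not equal $0$ --- and your second route begs the question, since Theorem~\ref{theorem:charCloseSystems} only converts nearness of systems into nearness of subsystems; it cannot manufacture coinciding eigenvalues from a $0.5$ separation of the extremal ones.

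For comparison, the paper's own proof makes the identical unjustified leap, and does so without acknowledging it: it records $\abs{max\lambda_{\vec{V}f1}-max\lambda_{\vec{V}f2}}\in[0,0.5]\subseteq[0,1]$ and then asserts ``hence, from Theorem~\ref{theorem:nearness}, $\vec{V}f1\ \rnear\ \vec{V}f2$'' with no intermediate step. So your proposal follows essentially the same route as the paper and inherits the same defect; the difference is that you flag the obstacle explicitly. The statement is only salvageable if the exact-zero nearness of Definition~\ref{def:nearness} is replaced by the tolerance-based notion informally announced in the introduction (proximity when $\abs{\varphi(x)-\varphi(y)}\in[0,1]$), i.e., by redefining $\rnear$ so that a characteristic distance in $[0,0.5]$ already counts as near. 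Under the definitions actually in force, your instinct that the argument does not terminate is correct.
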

\begin{proof}
From Theorem~\ref{theorem:stability condition}, for the vector field $\vec{V}f$ for a stable system, \\
\boxed{max\lambda_{vec{V}f}\in [0,\pm 1]}. 

For a pair of system vector fields $\vec{V}f1,\vec{V}f2$, assume that
\begin{center}   
\boxed{\boldsymbol{
\abs{max\lambda_{\vec{V}f1}-max\lambda_{\vec{V}f2}}\in [0,0.5]\in [0,1]
}}
\end{center}
Hence, from Theorem~\ref{theorem:nearness}, we have
$\vec{V}f1\ \rnear\ \vec{V}f2$.
\end{proof}
\end{mdframed}

\vspace*{0.2cm}

\begin{remark}{\bf (Magiros Stable System Motions Condition).}\\
Let the extreme closeness stability condition Theorem~\ref{theorem:nearStableSystems} corresponds to a pair of vector fields \boxed{\vec{V}f1,\vec{V}f1: \vec{V}f1\ \rnear\ \vec{V}f1} derived from motion waveforms of a pair of physical systems.  In that case, the maximal \boxed{\lambda} different requirement would represent a pair of motion waveforms that are very stable.  That is, any small disturbance results in a small variation in the original waveform~\cite{Magiros1965}.  
\end{remark}

\vspace*{0.2cm}

\begin{example}{\bf (Vector Field Characteristics)}.\\
We have the followig characteristics for a vector field $(\vec{Vf},+)$ to work with, namely,
\begin{align*}
\vec{V}f &=\ \mbox{vector field in}\ \mathbb{C}.\\
Sg &= (\vec{Vf},+).\\
\varphi_1(Sg) &= 1\ \mbox{(tangent)} \vec{V}f.\\
              &\in (0,1)\ \mbox{(partly tangent)} \vec{V}f.\\
							&= 0\ \mbox{(non-tangent)}  \vec{V}f.\\
\varphi_2(Sg) &= 1\ \mbox{(normal)} \vec{V}f.\\
              &\in (0,1)\ \mbox{(partly normal)} \vec{V}.\\
							&= 0\ \mbox{(non-normal}\ \vec{V}f).\\
\varphi_3(Sg) &= \boxed{o(\vec{V}f)}\ \mbox{Vf size)}.\\
\varphi_4(Sg) &\left\{\vec{v}\subseteq \vec{V}f\right\}\in \left[0^o,\pm \frac{\pi}{2}\right]\ \mbox{(direction of $\vec{v}\in \vec{V}f$)}.\\
\varphi_5(Sg) &=
\abs{\abs{\varphi(\lambda_{\vec{V}f_1})}-\abs{\varphi(\lambda_{\vec{V}f_2})}}\in
[0,0.5] \Rightarrow \vec{V}f_1\ \rnear\ \vec{V}f_2.\\
\varphi_6(Sg) &= \mbox{(max)}\varphi(\lambda)\not{\in}\ \mbox{unit circle}\ \Rightarrow\mbox{unstable vector field}.\\
\varphi_7(Sg) &= \mbox{(max)}\varphi(\lambda)\in\ \mbox{unit circle}\ \Rightarrow\mbox{stable vector field}.\\
\Phi(Sg) &= \left\{\varphi_1(Sg),\varphi_2(Sg),
            \varphi_3(Sg),\varphi_4(Sg),\varphi_5(Sg),
						,\varphi_6(Sg),\varphi_7(Sg)\right\}.\mbox{\qquad \textcolor{blue}{$\blacksquare$}}  
\end{align*}
\end{example}

\vspace*{0.2cm}

\begin{figure}[!ht]
	\centering
\includegraphics[width=110mm]{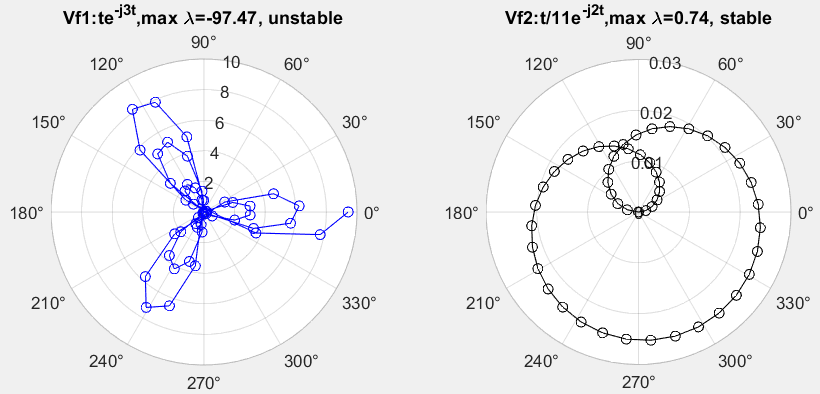}
	\caption{{\bf Case 1}: Characteristically non-near vector fields}
	\label{fig:Vf1Vf2}
\end{figure}

\begin{figure}[!ht]
	\centering
\includegraphics[width=110mm]{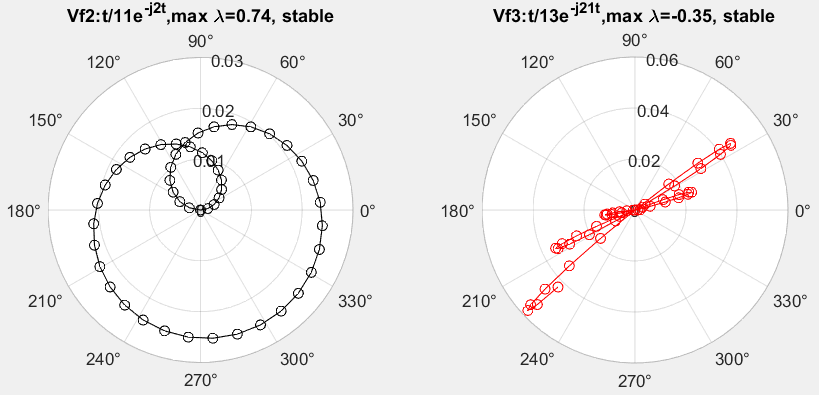}
	\caption{{\bf Case 2}: Characteristically near vector fields}
	\label{fig:Vf2Vf3}
\end{figure}

\begin{example}{(Characteristically Non-Near Vector Fields).}\\
In Fig.~\ref{fig:Vf1Vf2}, \boxed{\mbox{(not)}(Vf1\ \rnear\ Vf2)},
since 
\begin{align*}
\varphi_6(Sg_{Vf1}) & \mbox{(max)}\lambda=-97.47\ \Rightarrow\ \mbox{\bf unstable vector field}.\\
\varphi_6(Sg_{Vf2}) & \mbox{(max)}\lambda=0.74\ \Rightarrow\mbox{\bf stable vector field\qquad \textcolor{blue}{$\blacksquare$}}. 
\end{align*}
\end{example}  

\vspace*{0.2cm}

\begin{example}{(Characteristically Near Stable Vector Fields).}\\
In Fig.~\ref{fig:Vf2Vf3}, \boxed{Vf2\ \rnear\ Vf3},
since 
\begin{align*}
\varphi_5(Sg_{Vf2,Vf3}) & \abs{\abs{\varphi(\mbox{(max)}\lambda_{\vec{V}f_2}=0.74)}-\abs{\varphi(\lambda_{\vec{V}f_3}=-0.035)}}\in[0,0.5] \ \Rightarrow\mbox{\bf stable vector field}.\\ 
\varphi_6(Sg_{Vf2}) & \mbox{(max)}\lambda=0.74\ \Rightarrow\ \mbox{\bf stable vector field}.\\
\varphi_6(Sg_{Vf3}) & \mbox{(max)}\lambda=-0.35\ \Rightarrow\mbox{\bf stable vector field}.
\end{align*}
Also, observe that
\begin{align*}
\vec{v}_{left}\in \vec{V}f2 &= 0.02+0.1j\\
\vec{v}_{right}\in \vec{V}f3 &= 0.02+0.1j\\
\vec{v}_{left} &- \vec{v}_{right} = 0\\
               &\Rightarrow \vec{V}f2\ \rnear\ \vec{V}f3.\mbox{\qquad \textcolor{blue}{$\blacksquare$}}
\end{align*}

\end{example} 

\vspace*{0.2cm}

\begin{theorem}\label{theorem:charCloseAndProximal}{\bf (Characteristically Close Systems Are Proximally Close).}\\
Characteristic close systems are proximal.
\end{theorem}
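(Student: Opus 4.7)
The plan is to reduce the statement to the Fundamental Theorem of Near Systems (Theorem~\ref{theorem:nearness}) and then invoke the standard proximity axiom that two sets with nonempty intersection are $\delta$-near. Concretely, I would start by unpacking the hypothesis: $X\ \rnear\ Y$ means, by Definition~\ref{def:nearness}, that $d^{\widetilde{\Phi}}(X,Y) = 0$. Theorem~\ref{theorem:nearness} then yields witnesses $x\in X, y\in Y$ with $|\varphi(x)-\varphi(y)|=0$, so that $\varphi(x)=\varphi(y)$ as elements of $\mathbb{C}$.

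Next, I would lift this pointwise coincidence to a statement about the characteristic images $\Phi(X),\Phi(Y)\in 2^{\mathbb{C}}$. Since $\varphi(x)\in \Phi(X)$ and $\varphi(y)\in \Phi(Y)$ coincide, the intersection $\Phi(X)\cap \Phi(Y)$ contains at least this common point, hence is nonempty. I would then invoke the first (nontriviality) axiom of any Čech or EF proximity $\delta$ on $2^{\mathbb{C}}$, namely $A\cap B\neq\emptyset \Rightarrow A\ \delta\ B$, to conclude $\Phi(X)\ \delta\ \Phi(Y)$. Since characteristic nearness is witnessed at the level of the $\varphi$-images, this is precisely what it means for the systems $X,Y$ to be proximally close in the induced proximity on $2^{\mathbb{C}}$.

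I would expect the main obstacle to be essentially bookkeeping rather than mathematical depth: the statement \textbf{proximally close} is used in the theorem without a formal definition in the excerpt, so care is needed to point the reader at the underlying proximity (presumably the Lodato/EF proximity on subsets of $\mathbb{C}$ discussed in the cited references, e.g.,~\cite{PetersVergili2023}) and to justify that the characteristic map $\varphi$ transports the nearness $\rnear$ forward into that proximity. Once that identification is made explicit, the implication ``common characteristic value $\Rightarrow$ overlapping $\Phi$-images $\Rightarrow$ $\delta$-near'' is immediate, so the proof is essentially a one-line appeal to Theorem~\ref{theorem:nearness} followed by the nonempty-intersection axiom of $\delta$.
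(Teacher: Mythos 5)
Your proposal is correct and rests on the same key lemma the paper uses: the published proof of this theorem is literally the single sentence that the claim is an immediate consequence of Theorem~\ref{theorem:nearness}, and your first step is exactly that reduction. Where you diverge is that you actually supply the bridge the paper omits: the paper never defines ``proximal'' and never explains how a vanishing characteristic distance yields a proximity relation, whereas you make this explicit by passing to the images $\Phi(X),\Phi(Y)\in 2^{\mathbb{C}}$, observing that the witnesses $x,y$ with $\varphi(x)=\varphi(y)$ force $\Phi(X)\cap\Phi(Y)\neq\emptyset$, and then invoking the nonempty-intersection axiom of a \v{C}ech/Efremovi\v{c} proximity on $2^{\mathbb{C}}$. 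That extra step is not mere bookkeeping; it is the only part of the argument that actually touches the word ``proximal,'' and without it the paper's one-line proof is an assertion rather than a derivation. Your version buys a complete argument at the cost of having to commit to a specific proximity (the intersection-based one on the characteristic images), a choice the paper leaves implicit; the paper's version buys brevity at the cost of leaving the conclusion formally undefined. One small caution: Theorem~\ref{theorem:nearness} as stated gives $\abs{\varphi(a)-\varphi(b)}=0$ only for subsets $A,B$ of the systems, so to speak of witnesses in $X$ and $Y$ themselves you should take $A=X$, $B=Y$ (or route through Theorem~\ref{theorem:charCloseSystems}), but this does not affect the substance of your argument.
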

\begin{proof}
This is an immediated consequence of the fundamental near systems Theorem~\ref{theorem:nearness}.
\end{proof} 

\begin{figure}[!ht]
	\centering
\includegraphics[width=120mm]{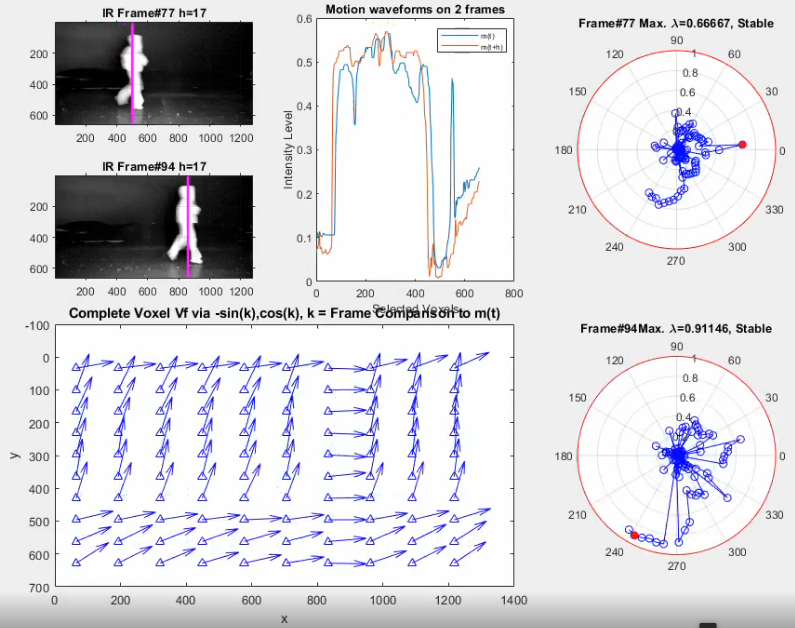}
	\caption{{\bf Case 1}: Characteristically near stable vector fields}
	\label{fig:nearStableFrames}
\end{figure}

\section{Application:characteristically near stable vector fields\\ on motion waveforms in infrared video frames}.
This section illustrates how to identify characteristically near motion waveforms in stable or unstable vector fields recorded in sequences of infrared video frames.  This application presents an advance over the method of evaluating motion waveforms in video frames that was introduced in~\cite{PetersLiyanage2024}.  In the following example, the vector fields emanate from seqiences of runner waveforms is recorded in frame sequences in infrared videos.  Be comparing the stability characteristics of the runner vector fields in pairs of video frames, we can then determine the overall stability of the runner. This approach carries over in assessing the characteristic closeness of the overall stability of the vector fields emanating from any vibrating system at different times. For simplicity, we consider only the maximum eigenvalues of the vector field in each videa frame.

\vspace*{0.2cm} 

\begin{example}{\bf (Case 1:  Pair of Characteristically Close Stable Vector Fields).}\\
In Fig.~\ref{fig:nearStableFrames}, contains a pair of characteristically near stable vector fields $\vec{V}f_{fr77}, \vec{V}f_{fr94}$ in frames 77 and 94.  Observe 
\begin{align*}
max\lambda_{fr77} & = 0.67,\\
max\lambda_{fr94} & = 0.91,\\
\abs{\abs{0.67}-\abs{-0.91}} & = 0.24 \in [0,0.5];\ \mbox{Hence, from characteristic}\ \varphi_5(Sg),\\
\vec{V}f_{fr77}\ & \rnear\ \vec{V}f_{fr94}.\ \mbox{\qquad \textcolor{blue}{$\blacksquare$}}
\end{align*}
\end{example}

\vspace*{0.2cm}

\begin{figure}[!ht]
	\centering
\includegraphics[width=120mm]{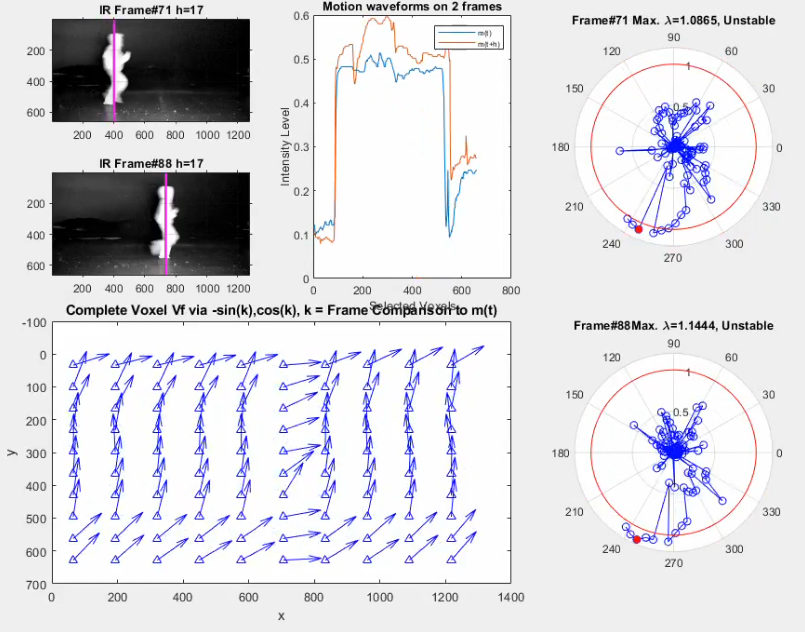}
	\caption{{\bf Case 2}:  Pair of Characteristically near unstable vector fields}
	\label{fig:nearNonStableFrames}
\end{figure}

\begin{example}{\bf (Case 2: Pair of Characteristically Close Unstable Vector Fields).}\\
In Fig.~\ref{fig:nearNonStableFrames}, contains a pair of unstable vector fields $\vec{V}f_{fr71}, \vec{V}f_{fr88}$ in frames 71 and 88.  Observe 
\begin{align*}
max\lambda_{fr71} & = 1.09,\\
max\lambda_{fr88} & = 1.44,\\
\abs{\abs{1.09}-\abs{1.44}} & = 0.35 \in [0,0.5];\ \mbox{Hence, from characteristic}\ \varphi_5(Sg),\\
\vec{V}f_{fr71}\ & \rnear\ \vec{V}f_{fr88}.\ \mbox{\qquad \textcolor{blue}{$\blacksquare$}}
\end{align*}
\end{example}

\vspace*{0.2cm}

\begin{figure}[!ht]
	\centering
\includegraphics[width=120mm]{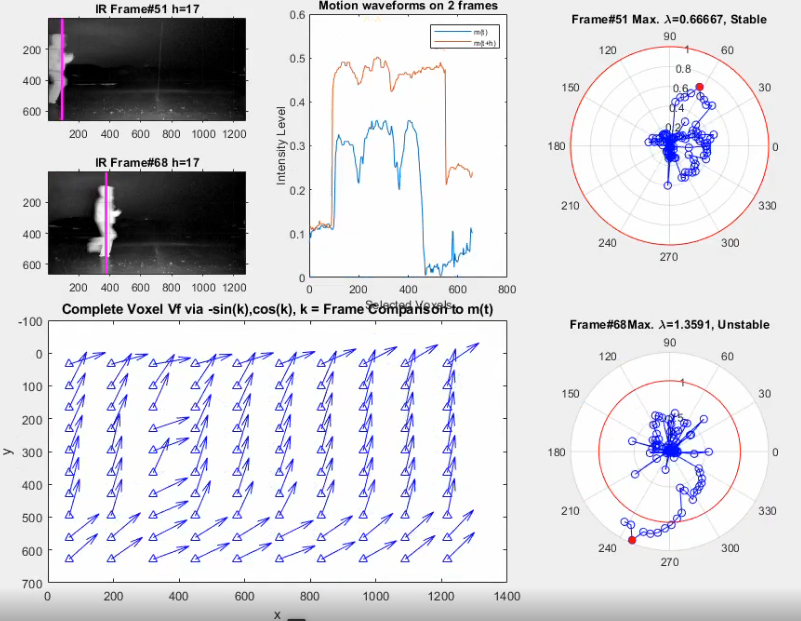}
	\caption{{\bf Case 3}: Pair of Characteristically near stable and unstable vector fields}
	\label{fig:nearStableUnstableFrames}
\end{figure}

\begin{example}\label{Case3:closeStableUnstableVfs}{\bf (Case 3: Characteristically Close Stable and Unstable Vector Fields).}\\
In Fig.~\ref{fig:nearStableUnstableFrames}, contains a stable vector field $\vec{V}f_{fr51}$ and unstable $\vec{V}f_{fr68}$ in frames 51 and 68.  Observe 
\begin{align*}
max\lambda_{fr51} & = 0.67,\\
max\lambda_{fr68} & = 1.36,\\
\abs{\abs{0.67}-\abs{1.36}} & = 0.69 \not{\in} [0,0.5];\ \mbox{Hence, from characteristic}\ \varphi_5(Sg),\\
\vec{V}f_{fr51}\ & \mbox{(not)}\rnear\ \vec{V}f_{fr68}.\ \mbox{\qquad \textcolor{blue}{$\blacksquare$}}
\end{align*}
\end{example}

\vspace*{0.2cm}

\begin{remark}{(Significance of Characteristically Non-Close Stable and Unstable Vector Fields in Case 3).}\\
Stable vector fields characteristically non-close to unstable vector fields are represented in Case 3 in Fig.~\ref{fig:nearStableUnstableFrames}. The vector fields in Example~\ref{Case3:closeStableUnstableVfs} have underlying systems that have the potential to be modulated to obtain a pair of charactistically close stable systems, since 
\begin{center}
\boxed{\boldsymbol{
\abs{\abs{0.67}-\abs{1.36}} = 0.69 \in [0,1]\ \mbox{(satisfies Theorem~\ref{theorem:nearness}).}
}}
\end{center}
That is, even though the vector field \boxed{\vec{V}f_{fr68}} is unstable in Case 3, it is characteristically close to the stable vector field \boxed{\vec{V}f_{fr51}} in Fig.~\ref{fig:nearStableUnstableFrames}. That characteristic closeness suggests the possibility of modulating the waveform slightly to change the vector field
\boxed{\vec{V}f_{fr68}} from unstable to unstable.

Unlike the temporal proximities of systems in the study in~\cite{HaiderPeters2021self-similar}, the characteristically close systems in Fig.~\ref{fig:nearStableUnstableFrames} are is different frames within the same video, but are separated by 10 frames and, hence, are not temporally close. The form of characteristic closeness introduced in this paper corroborates the results in~\cite{peters2025}.  Cases 1 and 2 illustrate the result in Theorem~\ref{theorem:charCloseAndProximal}, namely, characteristically close systems are proximal. \qquad \textcolor{blue}{$\blacksquare$}
\end{remark}

\section*{Acknowledgements}
We extend our thanks to Tane Vergili for sharing her profound insights concerning the proximity space theory in this paper.  In addition, we extend our thanks to Andrzej Skowron, Miros{\l}aw Pawlak, Divagar Vakeesan, William Hankley, Brent Clark and Sheela Ramanna for sharing their insights concerning time-constrained dynamical systems. In some ways, this paper is a partial answer to the question 'How [temporally] Near?' put forward in 2002~\cite{PawlakPeters2002}.

This research has been supported by the Natural Sciences \&
Engineering Research Council of Canada (NSERC) discovery grant 185986 
and Instituto Nazionale di Alta Matematica (INdAM) Francesco Severi, Gruppo Nazionale per le Strutture Algebriche, Geometriche e Loro Applicazioni grant 9 920160 000362, n.prot U 2016/000036 and Scientific and Technological Research Council of Turkey (T\"{U}B\.{I}TAK) Scientific Human
Resources Development (BIDEB) under grant no: 2221-1059B211301223.
\vspace{0.2cm}


\bibliographystyle{plain}

\end{document}